\title{Abstract interpretation as anti-refinement}
\author{Arnaud Spiwack}
\institute{Inria Paris-Rocquencourt\\\textsc{Ens}, 45 rue d'Ulm, 75230 Paris Cedex 05, France\\
\email{arnaud@spiwack.net}}
\begin{document}
  \maketitle
  \begin{abstract}
    This article shows a correspondence between abstract interpretation of imperative programs and the refinement calculus: in the refinement calculus, an abstract interpretation of a program is a specification which is a function.
    \par
    This correspondence can be used to guide the design of mechanically verified static analyses, keeping the correctness proof well separated from the heuristic parts of the algorithms.
  \end{abstract}
  \section{Introduction}
  A mathematical way to describe a static analysis is to see it as a program which tries to prove a theorem about programs. It may fail to do so, but if it succeeds, it effectively acts as a proof of the said theorem. The proof, however, is essentially impossible to check by a human.
  \par
  To increase the level of trust in a static analysis tool, the tool can be mechanically verified, for instance in Coq~\cite{coq}, thus ensuring that the produced proof is always correct. In the design of a static analysis tool, some parts are crucial for correctness, while other are heuristic. For instance, a static analysis can choose to lose precision to gain performance. Hence, from the point of view of he who wants to ensure the correctness, a static analysis can be seen as an interplay between a correctness enforcer and an heuristic-providing oracle. The question addressed in this article is how to formalise this interplay.
  \par
  To that end, we use the refinement calculus~\cite{Back1998,VonWright1994}. The refinement calculus is a well-established method for proving program properties. It comes with a natural notion of interaction, generally used to model the interaction between the implementer of a unit of code and its user. In the context of this article, the correctness enforcer plays the role of the implementer while the oracle is the user.
  \par
  Specifically, this article shows the connection between static analysis by abstract interpretation~\cite{Cousot1992} and the refinement calculus. Namely, it shows that an abstract domain constructs a \emph{specification} of the analysed program, which happens to be given by a function. This correspondence is instrumental in the design of Cosa~\cite{cosa}, a Coq formalisation of a shape analysis.
  \par
  The two subjects have some notation overlap, hence some unconventional notations will be used. The author apologises, but hopes that practitioners of both subjects will not find the notations too surprising or confusing.
  \section{Predicate transformers}
  \par
  Edsger Dijkstra introduced the idea of using predicate transformers as semantics of imperative programs~\cite{Dijkstra1978}. The idea is to associate to each program $p$ a function $\mathsf{wlp}{\left( p\right) }$, its \emph{weakest liberal precondition} operator, such that for a property $P$ of program states, $\mathsf{wlp}{\left( p\right) }{\left( P\right) }$ is the weakest condition on the initial state, such that after running $p$, if $p$ terminates, then $P$ holds.
  \par
  Weakest liberal precondition accounts for partial correctness. Alternatively, one could use the weakest precondition operator (which additionally imposes that $p$ terminates) to account for total correctness. Termination is not our purpose here, and we will identify programs with their weakest liberal precondition operator.
  \par
  Predicate transformer semantics is the starting point of refinement calculus~\cite{Back1998}, and is also commonly used in abstract interpretation -- see~\cite{Cousot1997a} for a discussion of weakest liberal precondition in relation to abstract interpretation.
  \par
  \subsection{Basic definitions}
  \par
  We will call \emph{predicate transformers} monotone functions in $\mathcal{P}{\left( A\right) }\rightarrow \mathcal{P}{\left( B\right) }$ for some sets $A$ and $B$, and write $\mathcal{P}{\left( A\right) }{\rightarrow }^{+}\mathcal{P}{\left( B\right) }$ for the set of predicate transformers. The set $\mathcal{P}{\left( A\right) }\rightarrow \mathcal{P}{\left( B\right) }$ inherits the complete lattice structure of $\mathcal{P}{\left( B\right) }$ and $\mathcal{P}{\left( A\right) }{\rightarrow }^{+}\mathcal{P}{\left( B\right) }$, equipped with the lattice operations of $\mathcal{P}{\left( A\right) }\rightarrow \mathcal{P}{\left( B\right) }$, is also a complete lattice. We write $a\sqsubseteq b\iff \forall {X}^{\in \mathcal{P}{\left( A\right) }}.\,\,a{\left( X\right) }\subseteq b{\left( X\right) }$ for the inherited order.
  \par
  We shall call the following operations of predicate transformers \emph{regular operations}. They have a direct interpretation as program constructs. Programs will be interpreted as homogeneous predicate transformers $\mathcal{P}{\left( A\right) }{\rightarrow }^{+}\mathcal{P}{\left( A\right) }$, however the regular operations also work with general predicate transformers $\mathcal{P}{\left( A\right) }{\rightarrow }^{+}\mathcal{P}{\left( B\right) }$.
  \begin{description}
    \item[Sequence] $\left( a;b\right) {\left( X\right) }=a{\left( b{\left( X\right) }\right) }$
    \par
                  Reads as ``do $a$ then do $b$''. The definition of sequence emphasises the fact that the weakest liberal precondition semantics is a \emph{backward} semantics. Sequence is associative, and monotone:
                  \begin{itemize}
      \item $\left( a;b\right) ;c=a;\left( b;c\right) $
      \item $a\sqsubseteq a'\land b\sqsubseteq b'\implies a;b\sqsubseteq a';b'$
    \end{itemize}
    \item[Skip] $1{\left( X\right) }=X$
    \par
              Does not do anything. Skip is neutral for sequence:
              \begin{itemize}
      \item $1;a=a=a;1$
    \end{itemize}
    \item[Choice] $\left( a+b\right) {\left( X\right) }=a{\left( X\right) }\cap b{\left( X\right) }$
    \par
                Non-deterministic choice. Choice is associative, commutative and monotone. Moreover sequence distributes on the right over choice:
                \begin{itemize}
      \item $\left( a+b\right) +c=a+\left( b+c\right) $
      \item $a+b=b+a$
      \item $\left( a+b\right) ;c=a;c+b;c$
      \item $p\sqsubseteq \left( a+b\right) ;q\iff p\sqsubseteq a;q\land p\sqsubseteq b;q$
    \end{itemize}
    \item[Hang] $0{\left( X\right) }=\top $
    \par
               Hang loops indefinitely. It is neutral for choice, sequence distributes on the right over it, and it is the largest predicate transformer:
               \begin{itemize}
      \item $0+a=a=a+0$
      \item $0;a=0$
      \item $a\sqsubseteq 0$
    \end{itemize}
    \item[Iteration] ${a}^{*}$, for $a\in \mathcal{P}{\left( A\right) }{\rightarrow }^{+}\mathcal{P}{\left( A\right) }$, is the largest fixed point of the (monotone) function which maps $p$ to $1+a;p$. It runs $a$ in sequence a non-deterministic number of times (including none, and infinitely many). It has the following properties~\cite[Chapter 21]{Back1998}:
               \begin{itemize}
      \item ${a}^{*};q=q+a;{a}^{*};q$
      \item $p\sqsubseteq q+a;p\implies p\sqsubseteq {a}^{*};q$
    \end{itemize}
  \end{description}
  \par
  It should be noted that despite the name ``regular operations'', predicate transformers do not form a Kleene algebra under these operations. Indeed the left distributivity laws are missing: $a;\left( b+c\right) =a;b+a;c$ and $a;0=0$ do not hold in general.
  \par
  \subsection{Programs}\label{latex_lib_label_1}
  \par
  In this setting, a programming language consists in a set $\mathcal{S}$ of states together with a set $\mathcal{I}\subseteq \mathcal{P}{\left( \mathcal{S}\right) }{\rightarrow }^{+}\mathcal{P}{\left( \mathcal{S}\right) }$ of \emph{basic instructions}. A program in the language $\left( \mathcal{S},\mathcal{I}\right) $ is an element of the subset of $\mathcal{P}{\left( \mathcal{S}\right) }{\rightarrow }^{+}\mathcal{P}{\left( \mathcal{S}\right) }$ generated by $\mathcal{I}$ and the regular operations.
  \par
  The use of non-deterministic choice and iterations make the programs non-deterministic. This is a natural setting for both program refinement and abstract interpretation. However, a typical programming language will feature a set of tests $\mathcal{B}$ such that for all $b\in B$, there is $\llbracket b\rrbracket \in \mathcal{P}{\left( \mathcal{S}\right) }$, and $\mathsf{guard}{\left( b\right) }$ is an instruction, such that $s\in \mathsf{guard}{\left( b\right) }{\left( X\right) }\iff s\in \llbracket b\rrbracket \implies s\in X$.
  \par
  With this assumption, the usual deterministic programming constructs can be recovered: $\mathsf{if}~b~\mathsf{then}~u~\mathsf{else}~v=\left( \mathsf{guard}{\left( b\right) };u\right) +\left( \mathsf{guard}{\left( \neg b\right) };v\right) $, and $\mathsf{while}~b~\mathsf{do}~u={\left( \mathsf{guard}{\left( b\right) };u\right) }^{*};\mathsf{guard}{\left( \neg b\right) }$.
  \par
  \par
  \begin{example}
    As an example, let us consider a language with a single memory cell containing an integer. In other words, $\mathcal{S}=\mathbb{Z}$. It has two tests, $\mathsf{pos}$ and $\mathsf{npos}$, whose semantics are given by:
    \begin{itemize}
      \item $\llbracket \mathsf{pos}\rrbracket \iff \left\{ n\,{\in }\,\mathbb{Z}\mid n>0\right\} $
      \item $\llbracket \mathsf{npos}\rrbracket \iff \left\{ n\,{\in }\,\mathbb{Z}\mid n\leqslant 0\right\} $
    \end{itemize}
    and a operation $\mathsf{dec}$, which decrements the integer held in the state. Its semantics is given by:
    \begin{itemize}
      \item $\mathsf{dec}{\left( X\right) }=\left\{ n\,{\in }\,\mathbb{Z}\mid n-1\in X\right\} $
    \end{itemize}
    \par
    This language expresses, for example, the simple program whose effect is to decrease the integer held in the state until it is non-positive. We shall call this program $d$:
    \begin{itemize}
      \item $d=\mathsf{while}~\mathsf{pos}~\mathsf{do}~\mathsf{dec}={\left( \mathsf{guard}{\left( \mathsf{pos}\right) };\mathsf{dec}\right) }^{*};\mathsf{guard}{\left( \mathsf{npos}\right) }$
    \end{itemize}
  \end{example}
  \par
  \par
  \subsection{Relations}
  \par
  A relation is usually seen as a subset of $A\times B$, however, it will be more convenient to see them, equivalently, as functions of $A\rightarrow \mathcal{P}{\left( B\right) }$.
  \par
  Given a relation $r\in A\rightarrow \mathcal{P}{\left( B\right) }$, we can extend it to a predicate transformer in two ways:
  \begin{itemize}
    \item $\left\langle r\right\rangle \in \mathcal{P}{\left( A\right) }{\rightarrow }^{+}\mathcal{P}{\left( B\right) }$ defined by $\left\langle r\right\rangle {\left( X\right) }=\bigcup _{\mbox{\scriptsize{$x$${\in}$$X$}}}r{\left( x\right) }$
    \item $\left[ r\right] \in \mathcal{P}{\left( B\right) }{\rightarrow }^{+}\mathcal{P}{\left( A\right) }$ defined by $\left[ r\right] {\left( Y\right) }=\left\{ x\,{\in }\,A\mid r{\left( x\right) }\subseteq Y\right\} $
  \end{itemize}
  The predicate transformers $\left\langle r\right\rangle $ and $\left[ r\right] $ form a Galois connection \emph{i.e.}:
  \begin{itemize}
    \item $\forall {X}^{\in \mathcal{P}{\left( A\right) }},{Y}^{\in \mathcal{P}{\left( B\right) }}.\,\,\left\langle r\right\rangle {\left( X\right) }\subseteq Y\iff X\subseteq \left[ r\right] {\left( Y\right) }$
  \end{itemize}
  or equivalently:
  \begin{itemize}
    \item $\forall {X}^{\in \mathcal{P}{\left( A\right) }}.\,\,X\subseteq \left[ r\right] {\left( \left\langle r\right\rangle {\left( X\right) }\right) }$
    \item $\forall {Y}^{\in \mathcal{P}{\left( B\right) }}.\,\,\left\langle r\right\rangle {\left( \left[ r\right] {\left( Y\right) }\right) }\subseteq Y$
  \end{itemize}
  In fact, every Galois connection between powersets is of that form. This is due to the general fact about complete lattices that a left adjoint -- like $\left\langle r\right\rangle $ -- is the same thing as a function which preserves joins. In the case of powersets, a function which preserves joins is characterised by its action on singletons, hence is of the form $\left\langle r\right\rangle $.
  \par
  Identifying a function $f$ to its graph, we hence have a Galois connection between $\left\langle f\right\rangle $ and $\left[ f\right] $. These are better known as the direct image and the inverse image of $f$, which we will write ${f}_{*}$ and ${f}^{-1}$ respectively. We shall make use of the following consequence of their being a Galois connection:
  \begin{itemize}
    \item $x\in {f}^{-1}{\left( X\right) }\iff f{\left( x\right) }\in X$
  \end{itemize}
  \par
  The properties of Galois connections can also be read directly in terms of the predicate transformer lattice:
  \begin{itemize}
    \item $\left\langle r\right\rangle ;p\sqsubseteq q\iff p\sqsubseteq \left[ r\right] ;q$
    \item $p;\left[ r\right] \sqsubseteq q\iff p\sqsubseteq q;\left\langle r\right\rangle $
    \item ${f}_{*};p\sqsubseteq q\iff p\sqsubseteq {f}^{-1};q$
    \item $p;{f}^{-1}\sqsubseteq q\iff p\sqsubseteq q;{f}_{*}$
  \end{itemize}
  \section{Abstract interpretation}\label{latex_lib_label_2}
  \par
  Abstract interpretation~\cite{Cousot1992} is a framework for static analysis in which the objects of study are called \emph{domains}. As general as the definitions in this section are, they fail to capture the full generality of abstract interpretation. However, they are sufficient for most purposes -- at least for imperative languages.
  \par
  Fixing a programming language $\left( \mathcal{S},\mathcal{I}\right) $, the powerset $\mathcal{P}{\left( \mathcal{S}\right) }$ is called the concrete domain and the interpretation of a program as a predicate transformer $\mathcal{P}{\left( A\right) }{\rightarrow }^{+}\mathcal{P}{\left( A\right) }$ is called the concrete semantics.
  \par
  A departure from common practice is that the concrete semantics, the weakest liberal precondition, is backward -- \emph{i.e.} a function from a set of final states to corresponding initial states -- whereas often the concrete semantics is chosen to be forward. This choice has been made to stay closer to the practice in refinement calculus. Having a backward concrete semantics does not, however, constrain the analysis to be backward too. In the rest of the paper we will mainly consider forward analysis. Moreover, forward semantics are usually constructed from a relational semantics, \emph{i.e.} they are of the form $\left\langle r\right\rangle $, in which case $\left[ r\right] $ will be our backward semantics.
  \par
   An abstract domain is a set ${\mathcal{S}}^{\sharp }$ together with a concretisation function $\gamma  : {\mathcal{S}}^{\sharp }\rightarrow \mathcal{P}{\left( \mathcal{S}\right) }$ and extra material to construct an \emph{abstract semantics} to each program. The abstract semantics of a program is a forward function ${p}^{\sharp } : {\mathcal{S}}^{\sharp }\rightarrow {\mathcal{S}}^{\sharp }$ which has the following correctness property:
  \begin{itemize}
    \item $\forall {{s}^{\sharp }}^{\in {\mathcal{S}}^{\sharp }}.\,\,\forall {S}^{\in \mathcal{P}{\left( \mathcal{S}\right) }}.\,\,S\subseteq \gamma {\left( {s}^{\sharp }\right) }\implies S\subseteq p{\left( \gamma {\left( {p}^{\sharp }{\left( {s}^{\sharp }\right) }\right) }\right) }$
  \end{itemize}
  Which can, equivalently be stated as:
  \begin{itemize}
    \item $\forall {{s}^{\sharp }}^{\in {\mathcal{S}}^{\sharp }}.\,\,\gamma {\left( {s}^{\sharp }\right) }\subseteq p{\left( \gamma {\left( {p}^{\sharp }{\left( {s}^{\sharp }\right) }\right) }\right) }$
  \end{itemize}
  This phrasing of the correctness property may look a bit contorted to the practitioner of abstract interpretation. It is the consequence of having a backward concrete semantics and a forward abstract semantics. When the concrete semantics is of the form $p=\left[ {p}_{0}\right] $, then this correctness property coincides with the more familiar one:
  \begin{itemize}
    \item $\forall {{s}^{\sharp }}^{\in {\mathcal{S}}^{\sharp }}.\,\,\left\langle {p}_{0}\right\rangle {\left( \gamma {\left( {s}^{\sharp }\right) }\right) }\subseteq \gamma {\left( {p}^{\sharp }{\left( {s}^{\sharp }\right) }\right) }$
  \end{itemize}
  \par
  Abstract domains are meant to be composed. For that reason, the abstract semantics ${p}^{\sharp }$ is computed out of more atomic functions, which are, in particular, stable by Cartesian product. Writing $s\leqslant s'\iff \gamma {\left( s\right) }\subseteq \gamma {\left( s'\right) }$ for the order induced on ${\mathcal{S}}^{\sharp }$ by the concretisation function, the abstract domain comes equipped with the following:
  \begin{description}
    \item[Join] An operator $\sqcup \in {\mathcal{S}}^{\sharp }\times {\mathcal{S}}^{\sharp }\rightarrow {\mathcal{S}}^{\sharp }$ such that:
    \begin{itemize}
      \item ${s}^{\sharp }\leqslant {s}^{\sharp }\sqcup {t}^{\sharp }$
      \item ${t}^{\sharp }\leqslant {s}^{\sharp }\sqcup {t}^{\sharp }$
    \end{itemize}
    \item[Post-fixed point] An operator $\mathsf{pfp}\in \left( {\mathcal{S}}^{\sharp }\rightarrow {\mathcal{S}}^{\sharp }\right) \rightarrow \left( {\mathcal{S}}^{\sharp }\rightarrow {\mathcal{S}}^{\sharp }\right) $ such that:
    \begin{itemize}
      \item $\forall {f}^{\in {\mathcal{S}}^{\sharp }\rightarrow {\mathcal{S}}^{\sharp }},{{s}^{\sharp }}^{\in {\mathcal{S}}^{\sharp }}.\,\,{s}^{\sharp }\leqslant \mathsf{pfp}{\left( f\right) }{\left( {s}^{\sharp }\right) }$
      \item $\forall {f}^{\in {\mathcal{S}}^{\sharp }\rightarrow {\mathcal{S}}^{\sharp }},{{s}^{\sharp }}^{\in {\mathcal{S}}^{\sharp }}.\,\,f{\left( \mathsf{pfp}{\left( f\right) }{\left( {s}^{\sharp }\right) }\right) }\leqslant \mathsf{pfp}{\left( f\right) }{\left( {s}^{\sharp }\right) }$
    \end{itemize}
    \par
    Typically, the post-fixed point operator is derived from a widening operator $\nabla \in {\mathcal{S}}^{\sharp }\times {\mathcal{S}}^{\sharp }\rightarrow {\mathcal{S}}^{\sharp }$, which has the following properties:
    \begin{itemize}
      \item ${s}^{\sharp }\leqslant {s}^{\sharp }\nabla {t}^{\sharp }$
      \item ${t}^{\sharp }\leqslant {s}^{\sharp }\nabla {t}^{\sharp }$
      \item For every increasing sequence ${\left( {x}_{n}\right) }_{n\in \mathbb{N}}$, the sequence ${\left( {y}_{n}\right) }_{n\in \mathbb{N}}$ defined by ${y}_{0}={x}_{0}$ and ${y}_{n+1}={y}_{n}\nabla {x}_{n+1}$ verifies $\exists {n}^{\in \mathbb{N}}.\,\,{y}_{n+1}\leqslant {y}_{n}$.
    \end{itemize}
    Then, taking, mutually recursively, ${x}_{0}={s}^{\sharp }$, ${x}_{n+1}=f{\left( {y}_{n}\right) }$, and ${y}_{n}$ such as above, we can then define $\mathsf{pfp}{\left( f\right) }{\left( {s}^{\sharp }\right) }$ as any ${y}_{n}$ such that ${y}_{n+1}\leqslant {y}_{n}$.
    \item[Transfer functions] An abstract semantics ${i}^{\sharp }$ of the instruction $i\in \mathcal{I}$
  \end{description}
  \par
  The abstract semantics ${p}^{\sharp }$ of the program $p$ is defined by induction on $p$ where the base case is given by the transfer functions. The correction of ${p}^{\sharp }$ follows from the properties stated above.
  \begin{itemize}
    \item ${\left( a;b\right) }^{\sharp }{\left( {s}^{\sharp }\right) }={b}^{\sharp }{\left( {a}^{\sharp }{\left( {s}^{\sharp }\right) }\right) }$
    \item ${\left( a+b\right) }^{\sharp }{\left( {s}^{\sharp }\right) }=\left( {a}^{\sharp }{\left( {s}^{\sharp }\right) }\right) \sqcup \left( {b}^{\sharp }{\left( {s}^{\sharp }\right) }\right) $
    \item ${1}^{\sharp }{\left( {s}^{\sharp }\right) }={s}^{\sharp }$
    \item ${0}^{\sharp }{\left( {s}^{\sharp }\right) }$ can be chosen arbitrarily
    \item ${\left( {a}^{*}\right) }^{\sharp }{\left( {s}^{\sharp }\right) }=\mathsf{pfp}{\left( {a}^{\sharp }\right) }{\left( {s}^{\sharp }\right) }$
  \end{itemize}
  \par
  \par
  \begin{example}
    Let us define an abstract domain for the example language of Section~\ref{latex_lib_label_1}: we shall abstract the state -- a single integer -- by the signs it may take. More precisely, we take for ${S}^{\sharp }$ the non-empty sets in $\mathcal{P}{\left( \left\{ -,0,+\right\} \right) }$ and the concretisation is defined as:
    \begin{itemize}
      \item $\gamma {\left( {s}^{\sharp }\right) }=\left\{ n\,{\in }\,\mathbb{Z}\mid \mathsf{sign}{\left( n\right) }\in {s}^{\sharp }\right\} $
    \end{itemize}
    The abstract transfer function for guard instructions constrain the abstract state to the relevant signs.
    \begin{itemize}
      \item ${\mathsf{guard}}^{\sharp }{\left( \mathsf{pos}\right) }{\left( {s}^{\sharp }\right) }={s}^{\sharp }\cap \left\{ +\right\} $
      \item ${\mathsf{guard}}^{\sharp }{\left( \mathsf{npos}\right) }{\left( {s}^{\sharp }\right) }={s}^{\sharp }\cap \left\{ -,0\right\} $
    \end{itemize}
    The abstract transfer function for the decrementing command maps positive to non-negative and non-positive to negative:
    \begin{itemize}
      \item ${\mathsf{dec}}_{0}{\left( +\right) }=\left\{ 0,+\right\} $
      \item ${\mathsf{dec}}_{0}{\left( 0\right) }=\left\{ -\right\} $
      \item ${\mathsf{dec}}_{0}{\left( -\right) }=\left\{ -\right\} $
      \item ${\mathsf{dec}}^{\sharp }{\left( {s}^{\sharp }\right) }=\bigcup _{\mbox{\scriptsize{$x$${\in}$${s}^{\sharp }$}}}{\mathsf{dec}}_{0}{\left( x\right) }$
    \end{itemize}
    Since the abstract state space is a powerset, we can use union as the abstract join, and since it is finite, union is also a widening:
    \begin{itemize}
      \item ${s}^{\sharp }\nabla {t}^{\sharp }={s}^{\sharp }\sqcup {t}^{\sharp }={s}^{\sharp }\cup {t}^{\sharp }$
    \end{itemize}
    \par
    Now that the abstract domain is set up, let us run the abstract interpretation on the program $d$ from Section~\ref{latex_lib_label_1} with the input state $\left\{ 0,+\right\} $:
    \begin{enumerate}
      \item Entering the loop with initial state $\left\{ 0,+\right\} $
      \item Applying ${\mathsf{guard}}^{\sharp }{\left( \mathsf{pos}\right) }$: state becomes $\left\{ +\right\} $
      \item Applying ${\mathsf{dec}}^{\sharp }$: state becomes $\left\{ 0,+\right\} $
      \item Invariant found after one iteration: $\left\{ 0,+\right\} \cup \left\{ 0,+\right\} =\left\{ 0,+\right\} $
      \item Applying ${\mathsf{guard}}^{\sharp }{\left( \mathsf{npos}\right) }$: final state is $\left\{ 0\right\} $
    \end{enumerate}
  \end{example}
  \par
  \section{Data refinement}
  \par
  Refinement calculus~\cite{Back1998,VonWright1994} is a discipline to prove the correctness of imperative programs, in a spirit close to Hoare logic. It arises from the remark that, if most predicate transformers do not represent programs, they still represent program \emph{specifications}. Specifications are then \emph{refined} into more precise specifications, and eventually into programs.
  \par
  A key point of the refinement calculus is that the refined specification need not act on the same state as the abstract one. It is typical to use ideal objects -- like multisets -- on the abstract side, and more concrete datatypes -- like linked lists -- on the refined side.
  \par
  We say~\cite{VonWright1994} that $a\in \mathcal{P}{\left( A\right) }{\rightarrow }^{+}\mathcal{P}{\left( A\right) }$ is refined by $b\in \mathcal{P}{\left( B\right) }{\rightarrow }^{+}\mathcal{P}{\left( B\right) }$ through the \emph{coupling invariant} $\iota \in \mathcal{P}{\left( A\right) }{\rightarrow }^{+}\mathcal{P}{\left( B\right) }$, written $a\sqsubseteq _{\iota }b$, when $\iota ;a\sqsubseteq b;\iota $. Intuitively $\iota $ is an action which transforms concrete states into abstract states, so $\iota ;a\sqsubseteq b;\iota $ reads ``doing $b$ then abstracting the state is more precise than abstracting the state then doing $a$''. To emphasise that the type of the state has changed, this relation is often called a \emph{data refinement}.
  \par
  \par
  \begin{example}
    Specifications of imperative programs are typically given as pairs of a precondition and a postcondition. For instance: under the precondition that the initial state is a non-positive integer, the postcondition that the state is $0$ holds after the program has been run. Both preconditions and postconditions can be expressed systematically as (backward) predicate transformers; they can be paired up into a full specification using sequence:
    \begin{itemize}
      \item ${F}_{\mathsf{post}}{\left( X\right) }=\left\{ p\,{\in }\,\mathbb{Z}\mid 0\in X\right\} $
      \item ${F}_{\mathsf{pre}}{\left( X\right) }=\left\{ n\,{\in }\,\mathbb{Z}\mid n\leqslant 0\land n\in X\right\} $
      \item $F={F}_{\mathsf{pre}};{F}_{\mathsf{post}}$
    \end{itemize}
    So that ${F}_{\mathsf{post}}{\left( X\right) }$ is either all of $\mathbb{Z}$ if $0\in X$ or the empty set otherwise, and ${F}_{\mathsf{pre}}{\left( X\right) }$ simply ignores the states in $X$ which do not verify the precondition.
    \par
    The program $d$ from Section~\ref{latex_lib_label_1} meets the specification $F$, however, the state is represented as the \emph{opposite} integer. Hence we have an $\iota $ which reflects this representation:
    \begin{itemize}
      \item ${\iota }_{0}{\left( n\right) }=-n$
      \item $\iota ={{\iota }_{0}}_{*}$
    \end{itemize}
    As per the definition of refinement, the statement that the program $d$ implements the specification reads
    \begin{itemize}
      \item $F\sqsubseteq _{\iota }\mathsf{while}~\mathsf{pos}~\mathsf{do}~\mathsf{decr}$
    \end{itemize}
    It is equivalent to the statement that the precondition entails the weakest liberal precondition of $d=\mathsf{while}~\mathsf{pos}~\mathsf{do}~\mathsf{decr}$:
    \begin{itemize}
      \item $\forall {n}^{\in \mathbb{Z}}.\,\,n\geqslant 0\implies n\in \mathsf{wlp}{\left( d\right) }{\left( \left\{ 0\right\} \right) }$
    \end{itemize}
    which is the typical proof obligation in a Hoare logic setting.
    \par
    The take away from data refinement is that it does not matter what coupling invariant is used, as long as \emph{all the function use the same coupling invariant}. Or, more realistically, under some separation property, if all the function \emph{which have access to some part $A$ of the state} all have coupling invariants which agree on $A$.
    \par
    In practice there are two reasons to refine the type of (a part of) the state: it may be that it is an ideal type, say finite sets of integer, which may be refined into an actual concrete data type, for instance list of integers. Or it may be that the proposed data type is not efficient, and will be refined into a more efficient representation -- list of integers could be refined into binary trees.
  \end{example}
  \par
  \par
  \section{Abstract interpretation in refinement calculus}
  \par
  The main result of this article is that abstract interpretation can be characterised in the language of the refinement calculus: an abstract interpretation of a program $p$ is a \emph{specification} verified by $p$ which is also a function.
  \par
  \begin{theorem}
    \label{latex_lib_label_4}The soundness condition of abstract interpretation is a refinement condition:
         ${{p}^{\sharp }}^{-1}\sqsubseteq _{\left\langle \gamma \right\rangle }p\iff \forall {{s}^{\sharp }}^{\in {\mathcal{S}}^{\sharp }}.\,\,\gamma {\left( {s}^{\sharp }\right) }\subseteq p{\left( \gamma {\left( {p}^{\sharp }{\left( {s}^{\sharp }\right) }\right) }\right) }$
  \end{theorem}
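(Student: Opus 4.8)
The plan is to unfold the definition of data refinement on the left-hand side and reduce it, by elementary set manipulations, to the pointwise soundness condition on the right-hand side. By definition $a \sqsubseteq_\iota b$ abbreviates $\iota;a \sqsubseteq b;\iota$, so ${p^\sharp}^{-1} \sqsubseteq_{\langle\gamma\rangle} p$ unfolds to $\langle\gamma\rangle;{p^\sharp}^{-1} \sqsubseteq p;\langle\gamma\rangle$, that is, to the requirement that for every $X \subseteq \mathcal{S}^\sharp$,
\[
  \langle\gamma\rangle\bigl({p^\sharp}^{-1}(X)\bigr) \;\subseteq\; p\bigl(\langle\gamma\rangle(X)\bigr) ,
\]
where $\langle\gamma\rangle(Y) = \bigcup_{s^\sharp \in Y}\gamma(s^\sharp)$ and ${p^\sharp}^{-1}(X) = \{\,t^\sharp \mid p^\sharp(t^\sharp) \in X\,\}$, the latter by the stated property $x \in {f}^{-1}(X) \iff f(x) \in X$.

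I would first dispatch the implication from right to left. Assuming $\gamma(t^\sharp) \subseteq p(\gamma(p^\sharp(t^\sharp)))$ for all $t^\sharp$, fix $X$; for any $t^\sharp$ with $p^\sharp(t^\sharp) \in X$ we have $\gamma(p^\sharp(t^\sharp)) \subseteq \langle\gamma\rangle(X)$, so monotonicity of the predicate transformer $p$ gives $\gamma(t^\sharp) \subseteq p(\gamma(p^\sharp(t^\sharp))) \subseteq p(\langle\gamma\rangle(X))$; taking the union over all such $t^\sharp$ yields the displayed inclusion.

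For the converse it suffices to instantiate the refinement at the singleton $X = \{p^\sharp(s^\sharp)\}$: the right-hand side becomes $p(\gamma(p^\sharp(s^\sharp)))$, while the left-hand side contains $\gamma(s^\sharp)$ (take $t^\sharp = s^\sharp$), which is exactly the soundness condition at $s^\sharp$. The conceptual reason both directions work, worth recording, is that $\langle\gamma\rangle$ preserves arbitrary unions because it is a left adjoint, and ${p^\sharp}^{-1}$ preserves arbitrary unions immediately from $x \in {p^\sharp}^{-1}(X) \iff p^\sharp(x) \in X$; hence the composite $\langle\gamma\rangle;{p^\sharp}^{-1}$ is union preserving, and an inequality $L \sqsubseteq R$ of predicate transformers with $L$ union preserving and $R$ monotone holds precisely when it holds on all singletons --- and on $\{s^\sharp\}$ it is $\gamma(s^\sharp) \subseteq p(\gamma(p^\sharp(s^\sharp)))$.

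I do not anticipate a genuine obstacle: the argument is essentially bookkeeping with the Galois-connection vocabulary already assembled in the previous sections. The one place demanding care is the asymmetric placement of $\langle\gamma\rangle$ that the backward concrete semantics forces --- it occurs on the left in $\langle\gamma\rangle;{p^\sharp}^{-1}$ but on the right in $p;\langle\gamma\rangle$ --- so the unfolding of $;$ and of $\sqsubseteq$ should be checked against the definitions rather than read off from intuition about relational composition.
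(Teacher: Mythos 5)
Your proof is correct, and it reaches the result by a slightly different route than the paper. You unfold ${p^\sharp}^{-1}\sqsubseteq_{\langle\gamma\rangle}p$ directly as $\langle\gamma\rangle;{p^\sharp}^{-1}\sqsubseteq p;\langle\gamma\rangle$ and verify the pointwise inclusion over $X\subseteq\mathcal{S}^\sharp$ in both directions, with the key instantiation $X=\{p^\sharp(s^\sharp)\}$ and the union/monotonicity argument for the converse. The paper instead first transposes both occurrences of $\langle\gamma\rangle$ through the Galois connection to obtain the equivalent form ${p^\sharp}^{-1};[\gamma]\sqsubseteq[\gamma];p$, and then unfolds that as a chain of equivalences over $Y\in\mathcal{P}(\mathcal{S})$, ending with the instantiation $Y=\gamma(p^\sharp(s^\sharp))$ --- the exact mirror image of your singleton choice under the adjunction. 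What the paper's transposition buys is a proof written entirely as a sequence of logical equivalences with no case split into two implications; what your version buys is self-containedness, since it needs only the definitions of $\langle\gamma\rangle$, ${f}^{-1}$ and sequence, not the adjunction laws. One small imprecision in your closing remark: the instance of $\langle\gamma\rangle;{p^\sharp}^{-1}\sqsubseteq p;\langle\gamma\rangle$ at the singleton $\{s^\sharp\}$ is literally $\bigcup\{\gamma(t^\sharp)\mid p^\sharp(t^\sharp)=s^\sharp\}\subseteq p(\gamma(s^\sharp))$ rather than $\gamma(s^\sharp)\subseteq p(\gamma(p^\sharp(s^\sharp)))$; the two families of conditions (quantified over all $s^\sharp$) are equivalent, but the identification is not instance-by-instance. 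This does not affect your main argument, which correctly uses $X=\{p^\sharp(s^\sharp)\}$ rather than $X=\{s^\sharp\}$.
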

  \begin{proof}
    We have the following equivalent characterisation, thanks to the Galois connection properties:
    \begin{itemize}
      \item ${{p}^{\sharp }}^{-1}\sqsubseteq _{\left\langle \gamma \right\rangle }p\iff {{p}^{\sharp }}^{-1};\left[ \gamma \right] \sqsubseteq \left[ \gamma \right] ;p$
    \end{itemize}
    From which it follows that:
    \begin{list}{}{}
      \item ${{p}^{\sharp }}^{-1}\sqsubseteq _{\left\langle \gamma \right\rangle }p$
      \item ${\Longleftrightarrow}${\small{${\quad}$(Definition of sequence)}}
      \item $\forall {Y}^{\in \mathcal{P}{\left( \mathcal{S}\right) }}.\,\,{{p}^{\sharp }}^{-1}{\left( \left[ \gamma \right] {\left( Y\right) }\right) }\subseteq \left[ \gamma \right] {\left( p{\left( Y\right) }\right) }$
      \item ${\Longleftrightarrow}${\small{${\quad}$(Definition of inclusion)}}
      \item $\forall {Y}^{\in \mathcal{P}{\left( \mathcal{S}\right) }},{{s}^{\sharp }}^{\in {\mathcal{S}}^{\sharp }}.\,\,{s}^{\sharp }\in {{p}^{\sharp }}^{-1}{\left( \left[ \gamma \right] {\left( Y\right) }\right) }\implies {s}^{\sharp }\in \left[ \gamma \right] {\left( p{\left( Y\right) }\right) }$
      \item ${\Longleftrightarrow}${\small{${\quad}$(Definition of $\left[ \gamma \right] $ and basic property of ${{p}^{\sharp }}^{-1}$)}}
      \item $\forall {Y}^{\in \mathcal{P}{\left( \mathcal{S}\right) }},{{s}^{\sharp }}^{\in {\mathcal{S}}^{\sharp }}.\,\,{p}^{\sharp }{\left( {s}^{\sharp }\right) }\in \left[ \gamma \right] {\left( Y\right) }\implies \gamma {\left( {s}^{\sharp }\right) }\subseteq p{\left( Y\right) }$
      \item ${\Longleftrightarrow}${\small{${\quad}$(Definition of $\left[ \gamma \right] $)}}
      \item $\forall {Y}^{\in \mathcal{P}{\left( \mathcal{S}\right) }},{{s}^{\sharp }}^{\in {\mathcal{S}}^{\sharp }}.\,\,\gamma {\left( {p}^{\sharp }{\left( {s}^{\sharp }\right) }\right) }\subseteq Y\implies \gamma {\left( {s}^{\sharp }\right) }\subseteq p{\left( Y\right) }$
      \item ${\Longleftrightarrow}${\small{${\quad}$(${\Rightarrow}$ by $Y=\gamma {\left( {p}^{\sharp }{\left( {s}^{\sharp }\right) }\right) }$ and ${\Leftarrow}$ by monotonicity of $p$)}}
      \item $\forall {{s}^{\sharp }}^{\in {\mathcal{S}}^{\sharp }}.\,\,\gamma {\left( {s}^{\sharp }\right) }\subseteq p{\left( \gamma {\left( {p}^{\sharp }{\left( {s}^{\sharp }\right) }\right) }\right) }$
    \end{list}
  \end{proof}
  \par
  In~\cite{Cousot1992a}, Cousot \& Cousot describe abstract interpretation of inference rule systems. Their approach to defining abstract interpretation resembles refinement calculus, they use, in particular, the remark that inference rule systems can be represented as predicate transformers. Theorem~\ref{latex_lib_label_4} further illuminates the connection.
  \par
  Although so far we have mostly considered forward analyses, a similar characterisation to Theorem~\ref{latex_lib_label_4} holds for backward analysis:
  \begin{theorem}
    ${p}_{*}^{\sharp }\sqsubseteq _{\left\langle \gamma \right\rangle }p\iff \forall {{s}^{\sharp }}^{\in {\mathcal{S}}^{\sharp }}.\,\,\gamma {\left( {p}^{\sharp }{\left( {s}^{\sharp }\right) }\right) }\subseteq p{\left( \gamma {\left( {s}^{\sharp }\right) }\right) }$
  \end{theorem}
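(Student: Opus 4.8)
The plan is to unfold the data refinement directly, rather than route it through a Galois-connection manipulation as in the proof of Theorem~\ref{latex_lib_label_4}. By definition of $\sqsubseteq _{\left\langle \gamma \right\rangle }$, the left-hand side ${p}_{*}^{\sharp }\sqsubseteq _{\left\langle \gamma \right\rangle }p$ unfolds to $\left\langle \gamma \right\rangle ;{p}_{*}^{\sharp }\sqsubseteq p;\left\langle \gamma \right\rangle $. First I would compute both sides of this inequality of predicate transformers pointwise. Recalling that ${p}_{*}^{\sharp }=\left\langle {p}^{\sharp }\right\rangle $ is the direct image of ${p}^{\sharp }$ and that $\left\langle \gamma \right\rangle $ is a union of concretisations, for every $X\in \mathcal{P}{\left( {\mathcal{S}}^{\sharp }\right) }$ one gets $\left( \left\langle \gamma \right\rangle ;{p}_{*}^{\sharp }\right) {\left( X\right) }=\bigcup _{{s}^{\sharp }\in X}\gamma {\left( {p}^{\sharp }{\left( {s}^{\sharp }\right) }\right) }$ on the left and $\left( p;\left\langle \gamma \right\rangle \right) {\left( X\right) }=p{\left( \bigcup _{{s}^{\sharp }\in X}\gamma {\left( {s}^{\sharp }\right) }\right) }$ on the right. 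So the refinement is equivalent to the statement that $\bigcup _{{s}^{\sharp }\in X}\gamma {\left( {p}^{\sharp }{\left( {s}^{\sharp }\right) }\right) }\subseteq p{\left( \bigcup _{{s}^{\sharp }\in X}\gamma {\left( {s}^{\sharp }\right) }\right) }$ holds for all $X$.

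The remaining work is to show that this family of inclusions, indexed by subsets $X$, is equivalent to the pointwise inclusion $\gamma {\left( {p}^{\sharp }{\left( {s}^{\sharp }\right) }\right) }\subseteq p{\left( \gamma {\left( {s}^{\sharp }\right) }\right) }$ for all ${s}^{\sharp }$. One implication is immediate: instantiate the inclusion at the singleton $X=\left\{ {s}^{\sharp }\right\} $, where both unions collapse to a single term. For the converse, fix $X$; the pointwise hypothesis gives $\gamma {\left( {p}^{\sharp }{\left( {s}^{\sharp }\right) }\right) }\subseteq p{\left( \gamma {\left( {s}^{\sharp }\right) }\right) }$ for each ${s}^{\sharp }\in X$, and since $\gamma {\left( {s}^{\sharp }\right) }\subseteq \bigcup _{{t}^{\sharp }\in X}\gamma {\left( {t}^{\sharp }\right) }$, monotonicity of $p$ gives $p{\left( \gamma {\left( {s}^{\sharp }\right) }\right) }\subseteq p{\left( \bigcup _{{t}^{\sharp }\in X}\gamma {\left( {t}^{\sharp }\right) }\right) }$; taking the union over ${s}^{\sharp }\in X$ on the left-hand side then closes the argument.

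The structural fact underlying this is that $\left\langle \gamma \right\rangle ;{p}_{*}^{\sharp }$ equals $\left\langle \gamma \circ {p}^{\sharp }\right\rangle $, hence preserves joins and is thus determined by its action on singletons; so it can fail to be below the monotone predicate transformer $p;\left\langle \gamma \right\rangle $ only if it does so at some singleton. I might phrase the proof in those terms instead of via the explicit unions.

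I do not expect a real obstacle here. In contrast to Theorem~\ref{latex_lib_label_4}, whose proof pivots on swapping adjoints to line both operators up on the $\left[ \cdot \right] $ side, here the forward abstract semantics sits on the $\left\langle \cdot \right\rangle $ side while $p$ remains on the $\left[ \cdot \right] $ side, so there is no adjunction juggling to perform -- and none is needed, since the direct image already supplies the join-preservation the argument wants. The one place to stay alert is the converse direction, where monotonicity of $p$ must be invoked; this mirrors exactly the final step of the proof of Theorem~\ref{latex_lib_label_4}.
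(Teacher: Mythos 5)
Your proof is correct. Note that the paper does not actually supply a proof of this theorem: it is stated bare, with only the remark that it is ``a similar characterisation to Theorem~\ref{latex_lib_label_4}'', so the implied route is the adjoint-shuffling argument of that proof adapted to ${p}_{*}^{\sharp }$. Your route is genuinely different and, to my mind, cleaner for this direction of analysis: you unfold $\left\langle \gamma \right\rangle ;{p}_{*}^{\sharp }\sqsubseteq p;\left\langle \gamma \right\rangle $ pointwise, observe that the left-hand side is $\left\langle \gamma \circ {p}^{\sharp }\right\rangle $ and hence join-preserving, and reduce the quantification over all $X\in \mathcal{P}{\left( {\mathcal{S}}^{\sharp }\right) }$ to singletons, with monotonicity of $p$ closing the converse. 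All the individual computations check out: $\left( \left\langle \gamma \right\rangle ;{p}_{*}^{\sharp }\right) {\left( X\right) }=\bigcup _{{s}^{\sharp }\in X}\gamma {\left( {p}^{\sharp }{\left( {s}^{\sharp }\right) }\right) }$ and $\left( p;\left\langle \gamma \right\rangle \right) {\left( X\right) }=p{\left( \bigcup _{{s}^{\sharp }\in X}\gamma {\left( {s}^{\sharp }\right) }\right) }$ are right given the paper's backward reading of sequence, and ${f}_{*}=\left\langle f\right\rangle $ is exactly the paper's convention. What the paper's implied approach buys is uniformity (both theorems become instances of the same Galois-connection manipulation); what yours buys is an elementary, self-contained argument that makes visible \emph{why} the backward case is easier -- the forward abstract semantics already sits on the $\left\langle \cdot \right\rangle $ side, so join-preservation does all the work and no adjunction needs to be invoked. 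Either would be acceptable as the missing proof.
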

  \par
  In traditional refinement calculus, the process consists in starting with an abstract definition, and refine it towards a more concrete definition, weakening the preconditions, strengthening the postconditions while making the state more suitable for execution. In static analysis, refinement calculus is used somewhat backwards: starting from a concrete implementation, it is refined into a more abstract definition, in effect strengthening the precondition and weakening the postconditions, while still making the state more suitable for execution.
  \par
  \section{Conclusion}
  A previous work by Sylvain Boulmé and Michaël Périn~\cite{Boulme2013} uses refinement calculus as a mean to check, in Coq, the correctness of a certificate validation procedure for certificate meant to be output by an abstract interpreter. Although this work is at the intersection of abstract interpretation and refinement calculus, it does not try to establish a connection between refinement calculus and the correctness condition of the abstract interpretation procedure.
  \par
  The present article shows that the language of abstract interpretation can be recast in terms of the refinement calculus. This has been used in the formalisation of Cosa~\cite{cosa}, a Coq verified implementation of an abstract domain for shape analysis. Cosa targets Compcert C~\cite{compcert}, and uses numerical domains by David Pichardie \& \emph{al}~\cite{Blazy2013}.
  \par
  Cosa relies on a variant of the refinement calculus introduced by Peter Hancock based not on predicate transformers but on so-called \emph{interaction structures}~\cite{Hancock}. Compared to predicate transformers, interaction structures carry more information: the set of predicate transformers can be seen as a quotient of the set of interaction structures. The additional information contained in interaction structures can be used to derive a datatype of \emph{strategies} which the oracle is charged with providing, hence formalising the separation between the oracle, which has no bearing on the correctness and does not need to be mechanically verified, and the rules constituting the domain which ensure correctness.
  \par
  Interaction structures were initially developed as a variant of refinement calculus suitable for type theory. Thanks to the results of this article, interaction structures can be also leveraged for abstract interpretation.
  \bibliography{library}
\end{document}